\documentclass[conference,letterpaper]{IEEEtran}

\usepackage{cite}
\usepackage{booktabs}
\usepackage{amsthm}
\newtheorem{lemma}{Lemma}
\newtheorem{theorem}{Theorem}

\usepackage{multirow}
\usepackage{braket}
\usepackage{lineno,xcolor}
\usepackage{amsmath}
\usepackage{amsfonts}%
\usepackage{amssymb}
\usepackage{wasysym}
\usepackage{stmaryrd}
\usepackage{bbm} 
\usepackage{tabularx}
\usepackage{multirow}
\usepackage{mathtools}
\usepackage{enumerate}
\usepackage{subfigure}
\usepackage{pgf}
\usepackage{bm}
\usepackage{multicol}
\usepackage{color}
\usepackage{url}
\usepackage{bibunits}
\usepackage{wrapfig}
\usepackage{sidecap}
\usepackage{soul,xcolor}
\usepackage{eurosym}
\usepackage{mathrsfs}
\usepackage[utf8]{inputenc}
\usepackage{algorithmicx}
\usepackage{algpseudocode}
\usepackage{algorithm}
\usepackage{stmaryrd}
\usepackage{upgreek}
\usepackage[hidelinks]{hyperref}
\usepackage{paralist}
\usepackage{graphicx}        
\usepackage{setspace}
\usepackage{tikz}
\usetikzlibrary{arrows}
\usepackage{cleveref}
\usepackage{subfigure}

\usepackage{etoolbox}
\makeatletter
\patchcmd{\@makecaption}
{\scshape}
{}
{}
{}
\makeatother

\DeclareMathAlphabet{\mathpzc}{OT1}{pzc}{m}{it}

\usepackage{epstopdf}
\usepackage{soul}
\setstcolor{red} 
\setul{0pt}{0.7pt} 

\newtheorem{defn} {Definition}

\IEEEoverridecommandlockouts

\usepackage{enumitem}
\setlist{nolistsep}

\begin{document}
\title{Edge-Based Anisotropic Decoding for Generalized Bicycle Codes}
\author{Dimitris~Chytas, Paul N. Fessatidis, Boulat A. Bash,~\IEEEmembership{Member,~IEEE,} and Bane~Vasi\'{c},~\IEEEmembership{Fellow,~IEEE}\\
\IEEEauthorblockA{Department of Electrical and Computer Engineering, University of Arizona, Tucson, AZ, USA}
Email: \{dchytas, pfessatidis, boulat\}@arizona.edu, vasic@ece.arizona.edu
\thanks{The authors acknowledge the support of the National Science Foundation under grants ERC-1941583, CIF-2420424, CIF-2106189, CCF-2100013,
CCSS-2052751, and CoQREATE program under grant ERC-1941583. We also acknowledge a generous gift from our friends and Maecenases Dora
and Barry Bursey.}
}

\maketitle

\begin{abstract}
    Quantum low-density parity-check (QLDPC) codes provide non vanishing rates, distance scaling with the blocklength of the code, and  facilitate fast iterative decoding  because of their sparsity. However, in practice iterative decoding fails to exploit the distance of the code, because it cannot resolve the symmetries imposed by degeneracy. In this work, we provide a graph theoretic characterization of degeneracy for the family of generalized bicycle (GB) codes. 
    This viewpoint shows that harmful degenerate error patterns persist whenever they remain related by automorphisms preserved by the decoder. Motivated by symmetry breaking via graph coloring, we compare three coloring approaches: no coloring, block-coloring, and edge-coloring. For GB codes, we show that edge-coloring  can eliminate all automorphisms in low-weight stabilizer-induced subgraphs. We practically realize the coloring schemes as isotropic, block-anisotropic and edge-anisotropic min-sum (MS) decoding. Experimental results show that edge anisotropic min-sum decoding  obtains improved performance over isotropic and block anisotropic decoding for several GB codes in a small number of iterations.
\end{abstract}
\begin{IEEEkeywords}
QLDPC codes, degeneracy, graph automorphisms, iterative decoding, anisotropic update rules.
\end{IEEEkeywords}
\section{Introduction}

\IEEEPARstart{Q}{uantum} low-density parity-check (QLDPC) codes provide superior distance and non-vanishing rates compared to topological codes~\cite{panteleev2021quantumLinearMinDLocalTestable, QuantumTannerCodeszemor}. In addition, because of their sparsity, QLDPC codes facilitate linear complexity iterative  decoding.
However, realizing constant overhead fault tolerant quantum
computation with QLDPC codes faces several remaining
challenges. One of them is designing decoders  which achieve performance that improves with increasing code
distance, while also running faster than the syndrome-extraction
process so that decoding does not become the bottleneck of
the error correction cycle~\cite{Preskill_2025,delfosse2023choosedecoderfaulttolerantquantum}.

One main property that separates quantum from classical LDPC decoding is that of degeneracy, which stems from the orthogonality constraint. Degeneracy allows the induced error pattern to be matched up to a stabilizer, which means that there are more than one equivalent error patterns that match the same syndrome~\cite{quantumTS,Fuentes_DegeneracyImpact_IEEEAccess21}. This is harmful to iterative decoding, which, because of its symmetric rules, cannot distinguish between  equivalent error patterns,  and becomes a bottleneck  when the minimum
distance increases relative to stabilizer weight\cite{osd}. To mitigate degeneracy,
a widely used approach  is iterative decoding, like belief-propagation (BP) or min-sum (MS) followed by ordered statistics, i.e., BP-OSD~\cite{osd,amb,local} which is of polynomial complexity. Other techniques achieve competitive performance either at the cost of high complexity~\cite{decim,window} or induced latency~\cite{valentini2025restartbeliefgeneralquantum,muller2025improvedbeliefpropagationsufficient}.
Approaches of lower complexity that take into account the structure of the code are proposed in~\cite{chytas2025enhancedminsumdecodingquantum,chytasISIT, pradhan2025lineartimeiterativedecoders}, with the former proposing an anisotropic decoding for bivariate bicycle (BB) codes, and the latter devising decoding techniques based on the structure of hypergraph-product and lifted-product codes.

We focus on generalized bicycle (GB) codes because they combine strong finite length performance with implementation friendly structure, and their BB subclass is a competitive candidate for low-overhead fault tolerant memory~\cite{bravyi2024high}. GB  codes are also attractive  because their algebraic construction induces a natural partition of Tanner graph edges through the monomials defining the parity-check matrices. This makes them a particularly suitable setting for the present work: the same code structure that supports implementation also provides the edge classes needed to realize fine-grained anisotropic decoding. 
Motivated by the competitive performance of the anisotropic MS decoder proposed in~\cite{chytas2025enhancedminsumdecodingquantum}, we instead focus on analyzing how anisotropy can mitigate degeneracy using graph-theoretic tools. Hence, instead of focusing on the decoding dynamics of a specific iterative decoder, we provide a graph-theoretic framework that applies to a broader class of iterative decoders. 
In particular the degeneracy problem can be described by graph automorphisms, where breaking symmetry 
can be achieved by graph coloring. 
We investigate three  coloring approaches, namely, no coloring, block-coloring, and edge-coloring which are assigned (due to equivariance) to three types of decoders, that is isotropic decoding, block-anisotropic decoding and edge-anisotropic decoding. 
In practice we realize the coloring methods using anisotropic variations of MS decoding and simulate its performance for various 6-limited and 8-limited QLDPC codes proposed in~\cite{bravyi2024high,osd,wang2022distanceboundsgeneralizedbicycle} .

The rest of the paper is organized as follows. Section~\ref{sec:pre} briefly introduces the essential preliminaries of quantum error correction (QEC). Section~\ref{sec:Theory} provides a theoretical analysis on degeneracy based on graph automorphisms and graph coloring. Finally Section~\ref{sec:practice} outlines experimental results obtained by realizing the graph coloring schemes for various codes. 

\section{Preliminaries}
\label{sec:pre}
We consider Calderbank-Shor-Steane (CSS) stabilizer codes, where $X$ and $Z$ errors are decoded separately using two binary parity-check matrices $H_Z$ and $H_X$, respectively~\cite{calderbank1996quantum_exists,Gottesman97}. In this work, we focus on decoding one error type only; without loss of generality, we consider $X$ errors and use $H:=H_Z$. Let $\mathbf{e}\in\mathbb{F}_2^N$ denote the binary error vector and $\mathbf{s}=\mathbf{e}H^T$ the measured syndrome. Decoding succeeds if $\mathbf{e}\oplus \hat{\mathbf{e}}$ belongs to the rowspace of $H$.

We focus on GB codes~\cite{osd,mackay_quantum}, a CSS family defined by two commuting binary matrices $A$ and $B$ through
\begin{equation}
H_X=[A\;B], \qquad H_Z=[B^T\;A^T],
\end{equation}
with $AB=BA$. In the algebraic description of GB codes, the matrices $A$ and $B$ are specified as sums of shift terms. Throughout the paper, we refer to each such individual shift term as a \emph{monomial}. For standard GB codes based on circulants, these monomials are single powers $x^a$ of a cyclic shift operator. An important subclass is given by BB codes~\cite{bravyi2024high}, where two commuting shifts
\begin{equation}
x=S_{L_x}\otimes I_{L_y},\qquad y=I_{L_x}\otimes S_{L_y},
\end{equation}
satisfy $xy=yx$, $x^{L_x}=y^{L_y}=I$, and the matrices $A$ and $B$ are sparse bivariate polynomials in $x$ and $y$. In this case, the monomials are terms of the form $x^a y^b$.

Writing
$
A=\sum_i A_i,~B=\sum_j B_j,$
with each $A_i,B_j$ a monomial term, each nonzero monomial naturally induces a class of edges in the Tanner graph. These monomial edge classes will serve as the labels used in our edge-anisotropic decoding framework.

The Tanner graph $G$ of $H$ is a bipartite graph with $N$ variable nodes $v\in\{v_1,\dots,v_N\}$ and $M$ check nodes $c\in\{c_1,\dots,c_M\}$, where $(v,c)$ is an edge whenever $H_{vc}=1$. We denote by $\mathcal{M}(v)$ the set of checks neighboring variable node $v$, and by $\mathcal{N}(c)$ the set of variables neighboring check node $c$. The length of the shortest cycle in $G$ is referred to as girth.

We define two classes of syndrome-based decoding algorithms that we will use in the following sections.
\begin{defn}
\label{defn:isotropic}
Let $G=(\mathcal V,\mathcal C,\mathcal E)$ be a Tanner graph associated with a
parity-check matrix $H$ and a syndrome vector $\boldsymbol{s}$. Define as $c \in \mathcal{C}$ and $v \in \mathcal{V}$, a check and a variable node of $G$, and as $\mathcal{N}(c)$, $\mathcal{M}(v)$ their neighborhoods.
An \emph{isotropic syndrome-based decoder} is an iterative message-passing decoder
defined by update functions $\Phi$ and $\Psi$ such that, for all iterations $\ell\ge1$,
the messages evolve according to the flooding schedule
\begin{align}
\nu^{(\ell)}_{v\to c}
&=
\Phi\!\left(
\lambda,\,
\{\mu^{(\ell-1)}_{c'\to v} : c' \in \mathcal{M}(v)\setminus\{c\}\}
\right),
\label{eq:isotropic_v}
\\
\mu^{(\ell)}_{c\to v}
&=
\Psi\!\left(
s_c,\,
\{\nu^{(\ell)}_{v'\to c} : v' \in \mathcal{N}(c)\setminus\{v\}\}
\right),
\label{eq:isotropic_c}
\\
\nu^{(\ell)}_{v}
&=
\Phi\!\left(
\lambda,\,
\{\mu^{(\ell)}_{c\to v} : i \in \mathcal{M}(v)\}
\right),
\label{eq:isotropic_dec}
\end{align}
where $\nu^{(\ell)}_{v\to c}$ and $\mu^{(\ell)}_{c\to v}$ denote the variable-to-check and
check-to-variable messages, respectively, and $\nu^{(\ell)}_{v}$ denotes the tentative
decision at variable node $v \in \mathcal V$. The scalar $\lambda$ denotes a \emph{uniform prior} shared by all
variable nodes.

The decoder is called \emph{isotropic} if: i) the same update functions $\Phi$ and $\Psi$ are used at all variable and check nodes respectively, ii)  $\Phi$ and $\Psi$ depend only on the \emph{multiset} of incoming messages and are invariant under permutations of their arguments, and iii)  $\Phi$ and $\Psi$ are independent of the iteration index $\ell$.

\end{defn}

\begin{defn}
\label{defn:anisotropic}
Let $G=(\mathcal V,\mathcal C,\mathcal E)$ be a Tanner graph with syndrome labels $\boldsymbol{s}$. Suppose a family of variable-node update rules
\[
\Phi^{(1)},\dots,\Phi^{(K)}
\]
is given, while the check-node update rule $\Psi$ is shared across all check nodes.

A decoder is called \emph{variable-anisotropic} if there exists a labeling
\[
\chi_V:\mathcal V\to\{1,\dots,K\}
\]
such that, for each variable node $v\in\mathcal V$, all outgoing messages from $v$ are updated using the same rule $\Phi^{(\chi_V(v))}$.

A decoder is called \emph{edge-anisotropic} if there exists a labeling
\[
\chi_E:\mathcal E\to\{1,\dots,K\}
\]
such that, for each edge $(v,c)\in\mathcal E$, the message $\nu^{(\ell)}_{v\to c}$ is updated using the rule $\Phi^{(\chi_E(v,c))}$. In this case, the same variable node may use different update rules on different outgoing edges.

If all labels induce the same update rule, the decoder reduces to an isotropic syndrome-based decoder.
\end{defn}
We note that for the rest of the paper the terms coloring and labeling will be  used interchangeably.

\section{Symmetry Framework for Degeneracy}
\label{sec:Theory}
In this Section we characterize harmful degenerate errors for different types of iterative decoders using graph automorphisms.

\subsection{Decoder Equivariance and Symmetry Preservation}

For the following, the isolation assumption~\cite{iso} is considered, for which all nodes outside
the induced subgraph of interest are considered to have converged and can be excluded during the check-to-variable updates of the decoder.

Before we proceed, we provide  definitions of graph automorphisms that will be useful throughout the paper.
\begin{defn}
\label{def:labeled_aut}
Let $G=(\mathcal V,\mathcal C,\mathcal E)$ be an induced Tanner subgraph. A \emph{type-preserving automorphism} of $G$ is a bijection
\[
\psi:\mathcal V\sqcup\mathcal C \to \mathcal V\sqcup\mathcal C
\]
such that
\begin{enumerate}
    \item $\psi(\mathcal V)=\mathcal V$ and $\psi(\mathcal C)=\mathcal C$,
    \item $(v,c)\in\mathcal E$ if and only if $(\psi(v),\psi(c))\in\mathcal E$.
\end{enumerate}
The set of all such automorphisms is denoted by $\operatorname{Aut}(G)$.

Suppose in addition that $G$ is equipped with one or more labelings:
\[
\chi_V:\mathcal V\to X_V,\qquad
\chi_C:\mathcal C\to X_C,\qquad
\chi_E:\mathcal E\to X_E,
\]
where $X_V,X_C,X_E$ are label sets. Here $\chi_C$ may, for example, represent the syndrome labeling on check nodes, while $\chi_V$ or $\chi_E$ may represent decoder-dependent labelings.

A type-preserving automorphism $\psi\in\operatorname{Aut}(G)$ is said to \emph{preserve} a labeling if it preserves its value on the corresponding domain, namely
\[
\chi_V(v)=\chi_V(\psi(v)) \quad \forall v\in\mathcal V,
\]
\[
\chi_C(c)=\chi_C(\psi(c)) \quad \forall c\in\mathcal C,
\]
and
\[
\chi_E(e)=\chi_E(\psi(e)) \quad \forall e\in\mathcal E.
\]
For any specified collection of labelings, we write $\operatorname{Aut}(G,\chi)$ for the subgroup of $\operatorname{Aut}(G)$ preserving all labelings in that collection.
\end{defn}

\begin{lemma}
\label{lem:decoder_equivariance}
Let $G=(\mathcal V,\mathcal C,\mathcal E)$ be an induced Tanner subgraph with syndrome vector
$\boldsymbol{s}$, viewed as a check-node labeling. Consider a flooding-schedule
syndrome-based iterative decoder on $G$, and assume the isolation assumption holds.

Let $\psi\in \operatorname{Aut}(G,\chi_C)$, where $\chi_C(c)=s_c$ for all $c\in\mathcal C$.
Assume moreover that the initial variable-to-check messages and any boundary messages are
invariant under $\psi$.

If the decoder is isotropic, then for every iteration $\ell\ge 0$ and every edge $(v,c)\in\mathcal E$,
\[
\nu^{(\ell)}_{v\to c}=\nu^{(\ell)}_{\psi(v)\to\psi(c)},
\qquad
\mu^{(\ell)}_{c\to v}=\mu^{(\ell)}_{\psi(c)\to\psi(v)},
\]
and for every variable node $v\in\mathcal V$,
\[
\nu^{(\ell)}_{v}=\nu^{(\ell)}_{\psi(v)}.
\]

If the decoder is anisotropic with decoder labeling $\chi$, defined either on $\mathcal V$ or on
$\mathcal E$, and if in addition $\psi\in\operatorname{Aut}(G,\chi_C,\chi)$, then the same conclusion holds.

Hence the message evolution and tentative decisions are equivariant under every automorphism
preserved by the decoder.
\end{lemma}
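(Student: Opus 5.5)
We argue by induction on the iteration index $\ell$, proving all three equalities together for every edge and every variable node. The key structural fact is that a type-preserving automorphism $\psi$ maps neighborhoods onto neighborhoods. Condition 2 of Definition~\ref{def:labeled_aut} together with bijectivity gives $\psi(\mathcal{M}(v))=\mathcal{M}(\psi(v))$ and $\psi(\mathcal{N}(c))=\mathcal{N}(\psi(c))$. The same holds after deleting one neighbor: $\psi(\mathcal{M}(v)\setminus\{c\})=\mathcal{M}(\psi(v))\setminus\{\psi(c)\}$, and similarly for $\mathcal{N}(c)\setminus\{v\}$. Under the isolation assumption, these neighborhoods are taken inside the induced subgraph $G$. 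Any messages entering from outside are the boundary messages, which are $\psi$-invariant by hypothesis.

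\textbf{Base case.} At $\ell=0$, the initial variable-to-check messages are $\psi$-invariant by assumption; in the isotropic case they are just the uniform prior $\lambda$. Hence $\nu^{(0)}_{v\to c}=\nu^{(0)}_{\psi(v)\to\psi(c)}$. If $\mu^{(0)}$ or $\nu^{(0)}_v$ are defined, they are obtained from these initial messages by one application of $\Psi$ or $\Phi$, so the argument below covers them as well.

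\textbf{Inductive step.} Assume the equalities hold at iteration $\ell-1$. Consider the variable update \eqref{eq:isotropic_v} at $(\psi(v),\psi(c))$. Its argument multiset is $\{\mu^{(\ell-1)}_{c''\to\psi(v)}: c''\in\mathcal{M}(\psi(v))\setminus\{\psi(c)\}\}$. Substituting $c''=\psi(c')$ rewrites it as $\{\mu^{(\ell-1)}_{\psi(c')\to\psi(v)}: c'\in\mathcal{M}(v)\setminus\{c\}\}$. By the induction hypothesis this equals the multiset used at $(v,c)$. Since $\Phi$ is the same at every node, depends only on the multiset, and does not depend on $\ell$, we obtain $\nu^{(\ell)}_{v\to c}=\nu^{(\ell)}_{\psi(v)\to\psi(c)}$.

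The check update \eqref{eq:isotropic_c} is handled the same way, now using the freshly established equality of the $\nu^{(\ell)}$. The only new input is the syndrome: $s_{\psi(c)}=s_c$ holds because $\psi\in\operatorname{Aut}(G,\chi_C)$. The decision \eqref{eq:isotropic_dec} follows identically from the full neighborhood $\mathcal{M}(v)$.

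\textbf{Anisotropic case.} The same computation goes through once the rule selected at $(v,c)$ coincides with the rule selected at $(\psi(v),\psi(c))$. For a variable labeling this is $\chi_V(\psi(v))=\chi_V(v)$. For an edge labeling it is $\chi_E(\psi(v),\psi(c))=\chi_E(v,c)$. Both are exactly the statement $\psi\in\operatorname{Aut}(G,\chi_C,\chi)$. For the tentative decision under an edge labeling, one has to specify which rule is used at $v$; we take it to be determined by the multiset of incident edge labels, or by any node-level rule preserved by $\psi$.

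\textbf{Main obstacle.} The mathematics is routine; the obstacle is bookkeeping. First, the re-indexing by $\psi$ must be a bijection of the punctured neighborhoods, so that the multisets match including multiplicities. This is guaranteed because $\psi$ is a bijection preserving adjacency in both directions. Second, one must make the decision rule in the edge-anisotropic case well defined and $\psi$-compatible, and pin down that convention explicitly.
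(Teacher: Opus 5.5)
Your proof is correct and follows the same route as the paper: induction on $\ell$, using that $\psi$ induces bijections $\mathcal M(v)\setminus\{c\}\to\mathcal M(\psi(v))\setminus\{\psi(c)\}$ and $\mathcal N(c)\setminus\{v\}\to\mathcal N(\psi(c))\setminus\{\psi(v)\}$, that $\psi$ preserves the syndrome and decoder labels, and that the update maps are permutation invariant. You are slightly more careful than the paper on two points it passes over with ``the same argument applies'': the flooding order within an iteration (the check update uses the just-established $\nu^{(\ell)}$), and the need to fix a $\psi$-compatible tentative-decision rule in the edge-anisotropic case.
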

\begin{proof}
We prove the claim by induction on the iteration index $\ell$.

At initialization, the variable-to-check messages and all boundary messages are invariant
under $\psi$ by assumption, so the claim holds for $\ell=0$.

Now assume that at iteration $\ell$,
\[
\nu^{(\ell)}_{v\to c}=\nu^{(\ell)}_{\psi(v)\to\psi(c)},
\qquad
\mu^{(\ell)}_{c\to v}=\mu^{(\ell)}_{\psi(c)\to\psi(v)}
\]
for every edge $(v,c)\in\mathcal E$, and similarly that tentative decisions agree on
automorphism-related variable nodes.

Since $\psi$ preserves adjacency, it induces bijections
\[
\mathcal M(v)\setminus\{c\}\longrightarrow \mathcal M(\psi(v))\setminus\{\psi(c)\}
\]
and
\[
\mathcal N(c)\setminus\{v\}\longrightarrow \mathcal N(\psi(c))\setminus\{\psi(v)\}.
\]
By the induction hypothesis, corresponding incoming messages in these neighborhoods are equal.
Moreover, $\psi$ preserves all local data used by the update rules: syndrome labels in the isotropic
case, and additionally the decoder labeling $\chi$ in the anisotropic case. Therefore corresponding
nodes or edges apply the same local update rule to the same multiset of incoming messages. Since the
local update maps are permutation invariant in their incoming arguments, it follows that
\[
\nu^{(\ell+1)}_{v\to c}=\nu^{(\ell+1)}_{\psi(v)\to\psi(c)},
\qquad
\mu^{(\ell+1)}_{c\to v}=\mu^{(\ell+1)}_{\psi(c)\to\psi(v)}.
\]

The same argument applies to the tentative decision update at variable nodes, yielding
\[
\nu^{(\ell+1)}_{v}=\nu^{(\ell+1)}_{\psi(v)}.
\]
\end{proof}

For the following analysis, we assume that a stabilizer generator of a GB code induces a Tanner subgraph isomorphic to the subdivided complete bipartite graph $K_{m,n}$. For an error pattern $\mathcal E$ supported on the variable-node support of such a stabilizer, we define its complementary pattern
\[
\mathcal F:=\mathcal V\setminus\mathcal E.
\]
Since $\mathcal F$ differs from $\mathcal E$ by the full stabilizer support, the two patterns are degenerate. Moreover, if $|\mathcal E|=|\mathcal F|$, then necessarily
\[
|\mathcal E|=|\mathcal F|=\frac{m+n}{2},
\]
which requires $m+n$ to be even. Finally, $\mathcal E$ and $\mathcal F$ induce identical syndrome labels on the degree-$2$ check nodes, because a check is unsatisfied if and only if exactly one of its two neighbors belongs to the error pattern, and this condition is unchanged by complementation. The next theorem characterizes when these two degenerate patterns remain related by graph automorphisms under no coloring and under block coloring.

\begin{theorem}
\label{thm:none_block_subdivided_Kmn}
Let $T$ be a stabilizer-induced Tanner subgraph isomorphic to the subdivided complete bipartite graph $K_{m,n}$, with variable-node partition
\[
\mathcal V=\mathcal A\sqcup\mathcal B,
\qquad |\mathcal A|=m,\quad |\mathcal B|=n.
\]
For an error pattern $\mathcal E\subseteq\mathcal V$, define
\[
\mathcal F:=\mathcal V\setminus\mathcal E,
\qquad
x:=|\mathcal E\cap\mathcal A|,
\qquad
y:=|\mathcal E\cap\mathcal B|.
\]

Then the following hold.

\begin{enumerate}
    \item Under no coloring (equivalently, for isotropic decoding), there exists a syndrome-label-preserving automorphism
    \[
    \psi\in\mathrm{Aut}(T)
    \qquad\text{such that}\qquad
    \psi(\mathcal E)=\mathcal F
    \]
    if and only if either
    \begin{enumerate}
        \item $m\neq n$ and
        \[
        x=m-x,\qquad y=n-y,
        \]
        or
        \item $m=n$ and $|\mathcal E|=m$.
    \end{enumerate}

    \item Under block coloring $C_{\mathrm{blk}}$ (equivalently, for block-anisotropic decoding), there exists a syndrome- and color-preserving automorphism
    \[
    \psi\in\mathrm{Aut}(T,C_{\mathrm{blk}})
    \qquad\text{such that}\qquad
    \psi(\mathcal E)=\mathcal F
    \]
    if and only if
    \[
    x=m-x,\qquad y=n-y.
    \]
\end{enumerate}

Hence block coloring changes the failure condition only in the symmetric case $m=n$, where it removes the side-swapping automorphisms present under no coloring.
\end{theorem}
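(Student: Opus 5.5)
The plan is to describe $\mathrm{Aut}(T)$ explicitly, then characterize which automorphisms carry $\mathcal E$ to $\mathcal F$.

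\textbf{Structure of $\mathrm{Aut}(T)$.} In the subdivided $K_{m,n}$, each check node has degree $2$ and is adjacent to exactly one node of $\mathcal A$ and one of $\mathcal B$. The variable nodes have degrees $n$ (in $\mathcal A$) and $m$ (in $\mathcal B$). A type-preserving automorphism therefore restricts to an automorphism of the underlying simple graph $K_{m,n}$ on $\mathcal V$, and conversely each automorphism of $K_{m,n}$ lifts uniquely, because the check node subdividing $\{a,b\}$ is determined by its two neighbours.

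Hence $\mathrm{Aut}(T)\cong S_m\times S_n$ when $m\neq n$, since degrees force $\psi(\mathcal A)=\mathcal A$. When $m=n$ we get $\mathrm{Aut}(T)\cong (S_m\times S_m)\rtimes\mathbb Z_2$, where the extra elements swap $\mathcal A$ and $\mathcal B$.

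\textbf{Syndrome labels.} As noted before the theorem, the syndrome on check $c_{ab}$ is $\mathbb 1[a\in\mathcal E]\oplus\mathbb 1[b\in\mathcal E]$. For any lifted $\psi$ with $\psi(\mathcal E)=\mathcal F$, we have
\[
\mathbb 1[\psi(a)\in\mathcal E]=1-\mathbb 1[a\in\mathcal E],
\]
so $s_{\psi(c_{ab})}=s_{c_{ab}}$. Thus every automorphism mapping $\mathcal E$ to $\mathcal F$ is automatically syndrome-preserving, and the problem reduces to a counting question.

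\textbf{Part 1.} Suppose $m\ne n$. Then $\psi$ preserves $\mathcal A$ and $\mathcal B$, so $\psi(\mathcal E)=\mathcal F$ forces
\[
|\mathcal E\cap\mathcal A|=|\mathcal F\cap\mathcal A|, \qquad \text{i.e. } x=m-x,
\]
and likewise $y=n-y$. Conversely, given these equalities, choose permutations of $\mathcal A$ and of $\mathcal B$ exchanging $\mathcal E\cap\mathcal A$ with $\mathcal F\cap\mathcal A$ (equal sizes) and similarly on $\mathcal B$. The product of these permutations is the required $\psi$.

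Now suppose $m=n$. Side-preserving $\psi$ again require $x=m-x$ and $y=m-y$. Side-swapping $\psi$ require
\[
|\mathcal E\cap\mathcal A|=|\mathcal F\cap\mathcal B|, \qquad \text{i.e. } x=m-y,
\]
and symmetrically $y=m-x$; both amount to $x+y=m$. Since $x=y=m/2$ already implies $x+y=m$, the union of the two cases is exactly $|\mathcal E|=m$. Conversely, if $x+y=m$, take a bijection $\mathcal A\to\mathcal B$ sending $\mathcal E\cap\mathcal A$ onto $\mathcal F\cap\mathcal B$ (both have size $x$) and $\mathcal F\cap\mathcal A$ onto $\mathcal E\cap\mathcal B$ (both have size $m-x$). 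Take a similar bijection $\mathcal B\to\mathcal A$. Together they give a swapping automorphism with $\psi(\mathcal E)=\mathcal F$.

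\textbf{Part 2.} Block coloring assigns distinct colors to $\mathcal A$ and $\mathcal B$ (the two blocks of $H$), and correspondingly to the edge classes. Any color-preserving $\psi$ must therefore fix both sides, which is exactly the side-preserving subgroup $S_m\times S_n$. The $m\ne n$ argument then applies verbatim for all $m,n$. The permutations constructed there fix each side and map checks to checks on the same block pair, so they preserve $C_{\mathrm{blk}}$, including the edge colors.

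\textbf{Main obstacle.} The delicate step is the structural claim that every type-preserving automorphism of $T$ comes from $\mathrm{Aut}(K_{m,n})$ and that, for $m\neq n$, degree counting forbids side swaps. This must be stated cleanly for $\min(m,n)\ge 2$. The degenerate cases $m=1$ or $n=1$ may need a brief separate remark, since the degree argument still distinguishes the sides when $m\neq n$.
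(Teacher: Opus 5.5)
Your proof is correct and follows essentially the same route as the paper: degree counting forces side preservation when $m\neq n$, independent permutations within $\mathcal A$ and $\mathcal B$ lift uniquely to automorphisms of the subdivided graph, side swaps give the extra case $m=n$, and block coloring removes exactly those swaps. Your version is somewhat more complete than the paper's, since you explicitly check that any $\psi$ with $\psi(\mathcal E)=\mathcal F$ preserves the syndrome labels, prove that a side-swapping map requires $x+y=m$, and construct the swapping bijection rather than only matching types $(x,y)\mapsto(y,x)$.
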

\begin{proof}
Let
\[
x:=|\mathcal E\cap\mathcal A|,
\qquad
y:=|\mathcal E\cap\mathcal B|.
\]
Then
\[
|\mathcal E|=x+y,
\qquad
|\mathcal F|=(m-x)+(n-y)=m+n-(x+y).
\]

Since the full stabilizer support is precisely $\mathcal V=\mathcal A\sqcup\mathcal B$, the pattern $\mathcal F=\mathcal V\setminus\mathcal E$ differs from $\mathcal E$ by that stabilizer, and therefore $\mathcal E$ and $\mathcal F$ are degenerate. Moreover,
\[
|\mathcal E|=|\mathcal F|
\quad\Longleftrightarrow\quad
x+y=m+n-(x+y),
\]
which is equivalent to
\[
|\mathcal E|=|\mathcal F|=\frac{m+n}{2}.
\]

Next, consider any degree-$2$ check node of $T$. It is adjacent to one variable in $\mathcal A$ and one in $\mathcal B$, and its syndrome label is $1$ if and only if exactly one of its two neighbors belongs to the error pattern. This condition is unchanged when $\mathcal E$ is replaced by its complement $\mathcal F$, so $\mathcal E$ and $\mathcal F$ induce identical syndrome labels on the checks. 

Suppose first that $m\neq n$. The variables in $\mathcal A$ have degree $n$, while those in $\mathcal B$ have degree $m$. Since graph automorphisms preserve degree, every automorphism of $T$ preserves $\mathcal A$ and $\mathcal B$ setwise. Moreover, because $T$ is the subdivided $K_{m,n}$, any permutation within $\mathcal A$ together with any permutation within $\mathcal B$ extends uniquely to an automorphism of $T$. Hence, under either no coloring or block coloring, an automorphism can map $\mathcal E$ to $\mathcal F$ if and only if $\mathcal E$ and $\mathcal F$ select the same number of variables from each side, namely
\[
x=m-x,
\qquad
y=n-y.
\]
Equivalently,
\[
x=\frac m2,
\qquad
y=\frac n2.
\]

Now assume $m=n$. In the no-coloring case, the subdivided complete bipartite graph admits, in addition to independent permutations within the two sides, an automorphism that swaps $\mathcal A$ and $\mathcal B$. If $|\mathcal E|=m$, then $x+y=m$ and the complement $\mathcal F$ has type
\[
(|\mathcal F\cap\mathcal A|,|\mathcal F\cap\mathcal B|)=(m-x,m-y)=(y,x).
\]
Thus a side-swapping automorphism maps a pattern of type $(x,y)$ to one of type $(y,x)$, so every weight-$m$ pattern is related to its complement. 

Under block coloring, however, side-swapping is forbidden  when $m=n$, since $\mathcal A$ and $\mathcal B$ carry different colors. Therefore the only remaining color-preserving automorphisms are permutations within each side, and $\mathcal E$ can be mapped to $\mathcal F$ if and only if
\[
x=m-x,
\qquad
y=n-y.
\]
Hence block coloring leaves residual harmful symmetry only for patterns balanced within each block. 
\end{proof}

A more granular approach that removes automorphisms that persist for both cases of Theorem~\ref{thm:none_block_subdivided_Kmn} is presented below. The theorem applies to GB codes where each block $A,B$ is specified as a sum of distinct monomials, therefore each monomial corresponds to a set of edges. 

\begin{theorem}
\label{thm:gb_edge_rigidity}
Let $T=(\mathcal V,\mathcal C,\mathcal E)$ be a stabilizer-induced subgraph of a generalized bicycle code, and suppose that $T$ is isomorphic to a subdivided $K_{m,n}$ with variable-node partition
\[
\mathcal V=\mathcal A\sqcup\mathcal B,
\qquad |\mathcal A|=m,\quad |\mathcal B|=n.
\]
Assume that the edges incident to variables nodes of $\mathcal A$ are labeled by $m$ distinct edge labels, each corresponding to each of the monomials:
\[
\alpha_1,\dots,\alpha_m,
\]
while the edges incident to variables nodes of $\mathcal B$ are labeled by $n$ distinct edge labels, each corresponding to each of the monomials:
\[
\beta_1,\dots,\beta_n,
\]
with
\[
\{\alpha_1,\dots,\alpha_m\}\cap \{\beta_1,\dots,\beta_n\}=\varnothing.
\]
Then every degree-$2$ check node $c\in\mathcal C$ is uniquely identified by its ordered  pair
\[
\eta(c)=\bigl(\alpha_i,\beta_j\bigr),
\qquad 1\le i\le m,\; 1\le j\le n,
\]
and all such ordered pairs appear exactly once. Consequently,
no nontrivial color-preserving automorphism can map one error pattern to a distinct error pattern.
\end{theorem}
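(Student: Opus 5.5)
The plan has three steps: show that the edge labels pin down each variable node, then that they pin down each check node, and finally that any color-preserving automorphism must be the identity.

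\textbf{Step 1: variable nodes.} In the subdivided $K_{m,n}$, each variable $a\in\mathcal A$ has exactly $n$ neighbors, one check for each $b\in\mathcal B$. By the GB structure, the edges at $a$ all come from a single monomial column block of $H$. So I would first make the labeling convention precise: the monomial $\alpha_i$ labels the edges incident to one variable $a_i\in\mathcal A$, and $\beta_j$ labels those incident to $b_j\in\mathcal B$. In other words, $\alpha_i$ is the monomial whose shift carries the stabilizer's check to $a_i$. Distinctness of the monomials within $A$ then gives a bijection $i\mapsto a_i$, and likewise $j\mapsto b_j$.

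\textbf{Step 2: check nodes.} Each degree-$2$ check $c$ has exactly one neighbor $a_i\in\mathcal A$ and one neighbor $b_j\in\mathcal B$. Its two incident edges carry labels $\alpha_i$ and $\beta_j$. Since $\{\alpha\}\cap\{\beta\}=\varnothing$, the unordered pair of labels determines the ordered pair $\eta(c)=(\alpha_i,\beta_j)$. Completeness of $K_{m,n}$ means every pair $(a_i,b_j)$ is joined through exactly one subdivision check. Hence $\eta$ is a bijection from $\mathcal C$ onto $\{\alpha_1,\dots,\alpha_m\}\times\{\beta_1,\dots,\beta_n\}$, which is the first claim.

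\textbf{Step 3: rigidity.} Let $\psi\in\mathrm{Aut}(T,\chi_E)$. Edge-color preservation means the edge $(v,c)$ and the edge $(\psi(v),\psi(c))$ carry the same label. The edges with label $\alpha_i$ are exactly the edges at $a_i$. So $\psi$ maps an edge incident to $a_i$ to another edge incident to $a_i$, which forces $\psi(a_i)=a_i$ and similarly $\psi(b_j)=b_j$. In particular, the side swap available when $m=n$ in Theorem~\ref{thm:none_block_subdivided_Kmn} is excluded, because $\alpha$- and $\beta$-labels are disjoint. For checks, $\psi(c)$ is adjacent to $\psi(a_i)=a_i$ and $\psi(b_j)=b_j$, so $\eta(\psi(c))=\eta(c)$, and injectivity of $\eta$ gives $\psi(c)=c$. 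Thus $\psi=\mathrm{id}$. It follows that $\psi(\mathcal E)=\mathcal E$ for every pattern, so no color-preserving automorphism maps $\mathcal E$ to a distinct pattern such as $\mathcal F$. Combined with Lemma~\ref{lem:decoder_equivariance}, the edge-anisotropic decoder therefore has no residual symmetry on $T$.

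\textbf{Main obstacle.} The hard part is Step 1, justifying that each monomial label sits on the edges of exactly one variable of the stabilizer support. This has to come from the GB structure. A stabilizer row of $H_X=[A\;B]$ has support $\{\alpha_i\cdot u\}$ in the left block and $\{\beta_j\cdot u\}$ in the right block. The check between $\alpha_i u$ and $\beta_j u$ is the $H_Z$ row reached through the transposed monomials, so its two edges carry labels $\alpha_i$ and $\beta_j$. I would verify this by tracking how $x^ay^b$ acts on indices in $B^T$ and $A^T$. The theorem's assumptions about label assignment can be invoked to fix this convention if the algebra becomes cumbersome.
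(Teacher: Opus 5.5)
Your skeleton matches the paper's: sides are preserved because the label families are disjoint, checks are identified by their label pairs, and so $\psi=\mathrm{id}$. The gap is that Step 1 asserts a labeling structure GB codes do not have, and Step 3 depends on it.

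\textbf{Why Step 1 is false.} Each monomial is a permutation matrix. So in the Tanner graph of $H_Z=[B^T\;A^T]$, every variable node has exactly one incident edge from each monomial class of its block. The $n$ edges at a variable of $\mathcal A$ therefore carry $n$ \emph{distinct} labels, never a single label $\alpha_i$. Concretely, write the monomials as shifts and take the stabilizer row $r$ of $H_X=[A\;B]$. Then $a_i=r+\alpha_i$, $b_j=r+\beta_j$, and the check joining them is $c_{ij}=r+\alpha_i+\beta_j$. The edge $(a_i,c_{ij})$ carries label $\beta_j$, and the edge $(b_j,c_{ij})$ carries label $\alpha_i$. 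Your last paragraph correctly finds the label set $\{\alpha_i,\beta_j\}$ at the check, but tracking which edge carries which label refutes your convention rather than justifying it. Under your convention the edge coloring would simply give every variable its own color, i.e.\ a variable-anisotropic decoder, and rigidity would be trivial. That is not the edge-anisotropic setting the theorem addresses.

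\textbf{Consequence for Step 3.} The inference ``the edges with label $\alpha_i$ are exactly the edges at $a_i$, so $\psi(a_i)=a_i$'' fails. The $\alpha_i$-edges are $(b_1,c_{i1}),\dots,(b_n,c_{in})$, one at each variable of $\mathcal B$, and preserving that set fixes no variable. Your check-fixing step uses $\psi(a_i)=a_i$ and $\psi(b_j)=b_j$, so it falls with it.

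\textbf{The repair} is to reverse the order, as the paper does.
\begin{enumerate}
\item Disjointness of the label families gives $\psi(\mathcal A)=\mathcal A$ and $\psi(\mathcal B)=\mathcal B$.
\item $\psi$ carries the two edges at $c$ onto the two edges at $\psi(c)$ with the same labels. Hence $\eta(\psi(c))=\eta(c)$ directly, with no prior knowledge about the variables.
\item Injectivity of $\eta$ gives $\psi(c)=c$.
\item Each variable is then fixed as the unique $\mathcal A$- (or $\mathcal B$-) neighbor of any fixed check adjacent to it.
\end{enumerate}

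Injectivity of $\eta$ must also be argued differently from your Step 2, since labels are not attached to variables. Take a check $c$ with $\mathcal B$-edge label $\alpha_i$, $\mathcal A$-edge label $\beta_j$, and neighbors $a_k, b_l$. Then $c=r+\alpha_k+\beta_j=r+\beta_l+\alpha_i$. The sums $\alpha_{i'}+\beta_{j'}$ are pairwise distinct, because otherwise some check would have two $\mathcal A$-neighbors, contradicting the subdivided-$K_{m,n}$ hypothesis. This forces $k=i$ and $l=j$, so $c=c_{ij}$ is determined by its pair.
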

\begin{proof}
Since the edge labels on the $\mathcal A$ side belong to the family
$\{\alpha_1,\dots,\alpha_m\}$ and those on the $\mathcal B$ side belong to the disjoint family
$\{\beta_1,\dots,\beta_n\}$, any color-preserving automorphism must preserve
$\mathcal A$ and $\mathcal B$ setwise. Because $T$ is isomorphic to a subdivided $K_{m,n}$,
each degree-$2$ check joins exactly one variable on the $\mathcal A$ side to one variable on the
$\mathcal B$ side. Hence every check node is associated with a unique ordered pair
\[
(\alpha_i,\beta_j),
\qquad 1\le i\le m,\; 1\le j\le n.
\]
Since there are exactly $mn$ such ordered pairs and exactly $mn$ degree-$2$ checks in a subdivided
$K_{m,n}$, each ordered pair appears exactly once.

Let $\psi\in\mathrm{Aut}(T,\chi)$ be color-preserving. Because $\psi$ preserves the two sides and
the edge labels, it preserves the ordered incident monomial pair of every check node. By uniqueness
of these pairs, every degree-$2$ check node is fixed. Each variable node is then uniquely determined
by its adjacent degree-$2$ checks, so all variable nodes are fixed as well. Therefore
$\psi$ is the identity.
\end{proof}

\subsection{Linear Combinations of Stabilizer Generators}
In this section, we focus on the $H_Z$ parity check matrix of $6$-limited and $8$-limited codes that have girth equal to six. For $6$-limited codes, we pick two representatives, one is the $[[288,12,18]]$ BB code\cite{bravyi2024high}, and the other one is the  blocklength $166$ and distance $18$ code from~\cite{wang2022distanceboundsgeneralizedbicycle}. The  induced subgraph of the stabilizer generator of each code is isomorphic to the subdivided $K_{3,3}$ and $K_{2,4}$ complete bipartite graphs respectively. Regarding $8$-limited codes, we choose the $[[180,10,15\leq d\leq 18]]$ $A5$ code~\cite{osd}, for which the induced subgraph of its stabilizer generator is isomorphic to the subdivided $K_{4,4}$  complete bipartite graph. We count error patterns  supported on low-weight stabilizers, for which there exists an automorphism mapping the pattern to a degenerate error pattern of the same weight. We perform this analysis under three labeling schemes: no coloring, block-coloring and edge-coloring. Stabilizers are obtained by linear combinations of  stabilizer generators and here we  restrict attention to the three minimum weight cases: a single stabilizer generator; two stabilizer generators sharing one variable node; and three stabilizer generators forming a $6$-cycle in the $H_X$ matrix, so that each pair shares exactly one variable node. As an example, Fig.~\ref{fig:stabilizers} illustrates stabilizers formed by stabilizer generators that share one variable node. 

Enumeration is carried out using nauty~\cite{mckay2013practicalgraphisomorphismii}, and the obtained results are shown in Table~\ref{tab:orbit_counts}. In the table, \emph{Single} denotes a single stabilizer generator. The entries \emph{shared A} and \emph{shared B} denote pairs of generators sharing one variable node in block $A$ or block $B$, respectively. The entries $(A,A,A)$ and $(A,A,B)$ denote triples of generators forming a $6$-cycle, where the shared variable nodes lie either all in block $A$ or in blocks $A,A,B$. For the blocklength-$166$ code, whose stabilizer generators induce subdivided $K_{2,4}$ graphs, the two sides are not exchangeable since $m\neq n$, so configurations involving shared nodes in different blocks must be distinguished. In particular, no $6$-cycles involving more than one shared $B$-side node occur in this code. By contrast, for the $K_{3,3}$ and $K_{4,4}$ cases we have $m=n$, so the two sides are exchangeable and configurations such as $(A,B,B)$ or $(B,B,B)$ are isomorphic to the representatives already listed.
\begin{figure}
    \centering
    \begin{subfigure}[]
    {
        \centering
        \includegraphics[width=.2\textwidth]{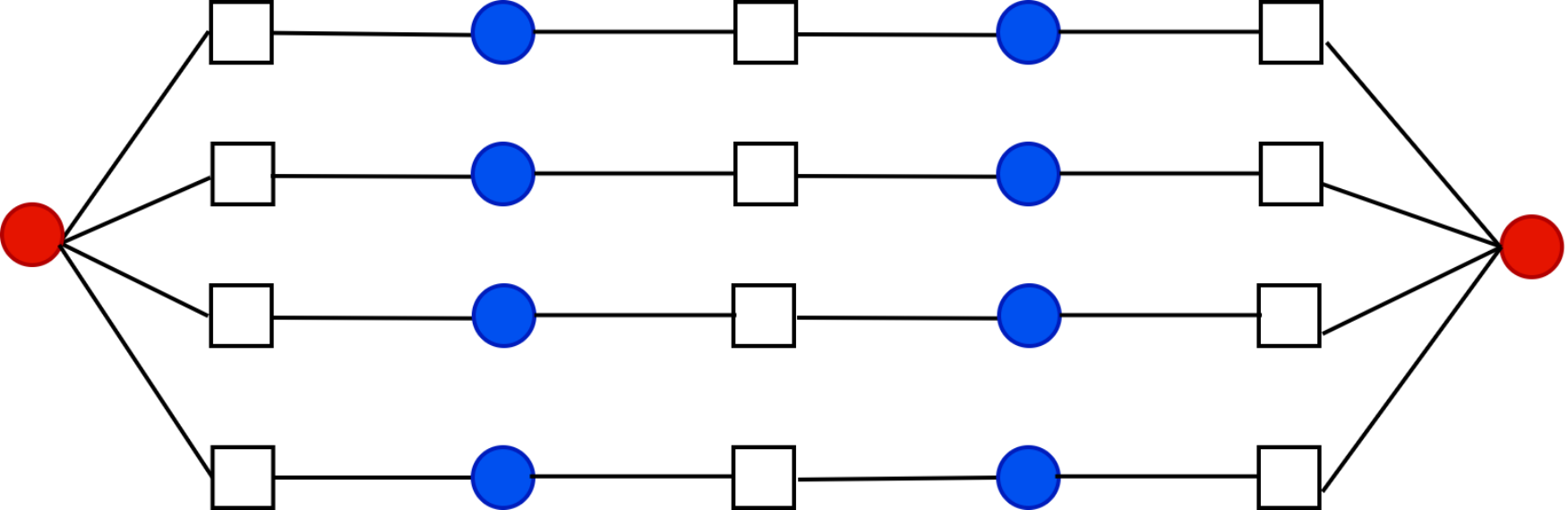}
        \label{fig:stab24a}
        }
    \end{subfigure}
    \begin{subfigure}[]
    {
        \centering
        \includegraphics[width=.13\textwidth]{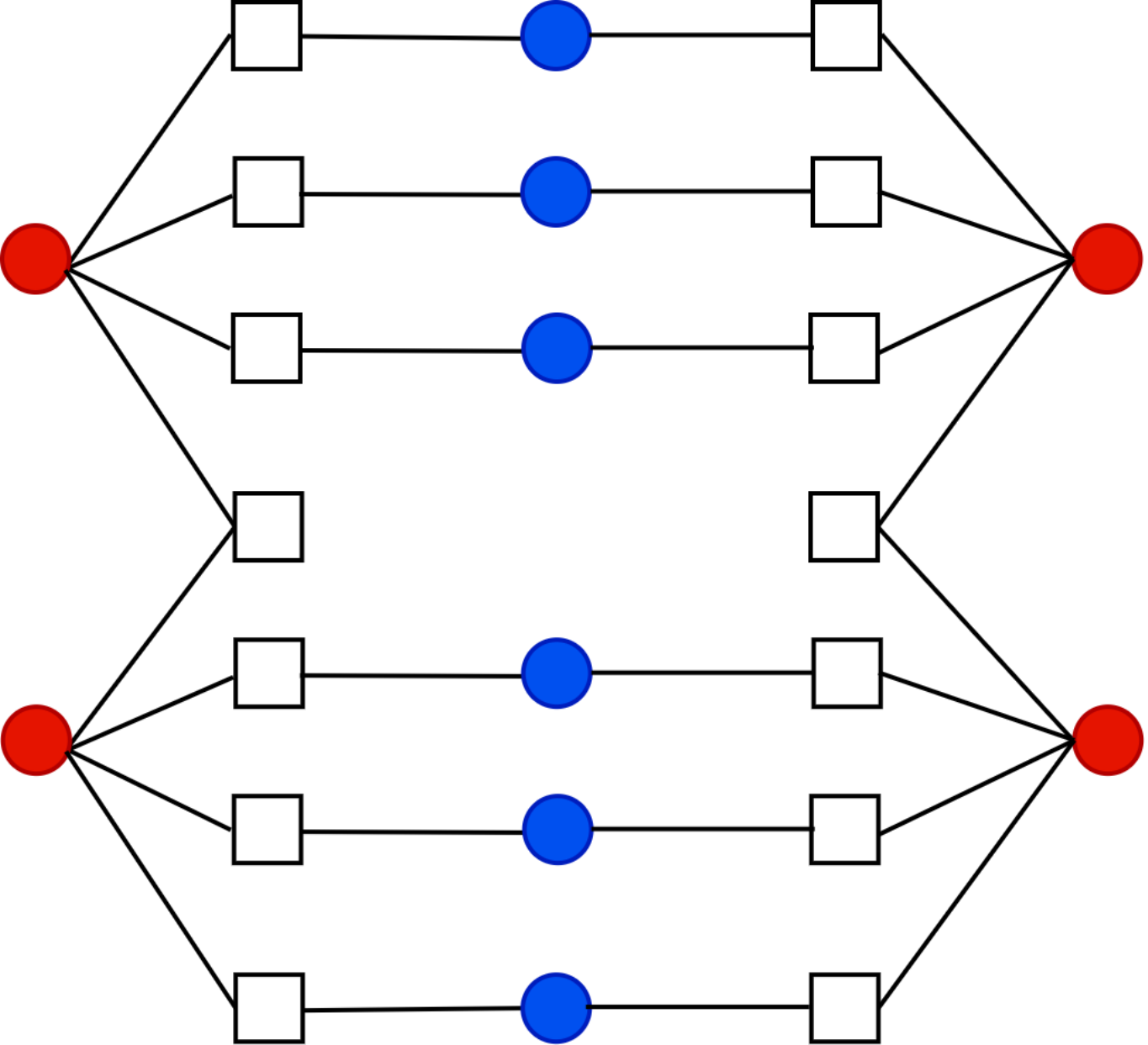}
        \label{fig:stab24b}
        }
    \end{subfigure}
    \begin{subfigure}[]
    {
        \centering
        \includegraphics[width=.3\textwidth]{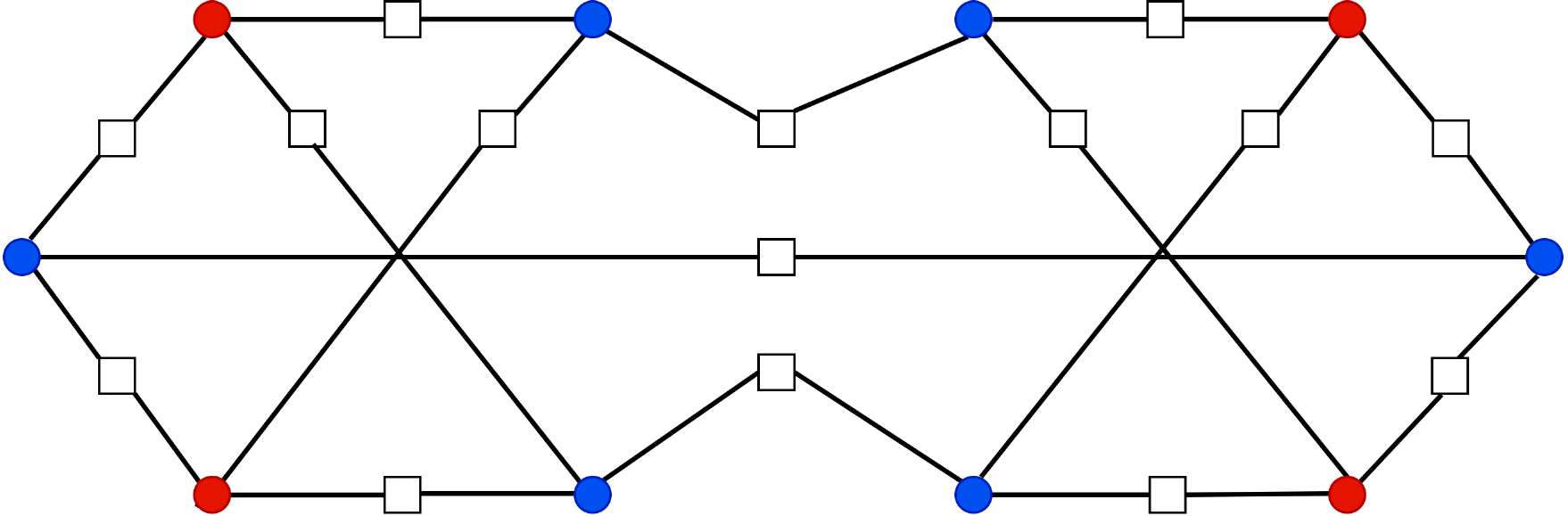}
        \label{fig:stab33}
        }
     \end{subfigure}
    \caption{Induced subgraphs of stabilizers formed by linear combinations of stabilizer generators isomorphic to the subdivided $K_{2,4}$ \subref{fig:stab24a},\subref{fig:stab24b} and to the subdivided $K_{3,3}$ \subref{fig:stab33} complete bipartite graph. Variable nodes are color coded based on the block they belong to.} 
    \label{fig:stabilizers}
\end{figure}

\begin{table}[t]
\centering
\caption{Number of harmful configurations mapped by a graph automorphism to a degenerate error pattern of the same weight, for families of $6$-limited and $8$-limited GB codes.}
\label{tab:orbit_counts}
\footnotesize
\setlength{\tabcolsep}{3pt}
\renewcommand{\arraystretch}{1.08}
\resizebox{\columnwidth}{!}{%
\begin{tabular}{llccccc}
\toprule
\textbf{Stab.} & \textbf{Coloring} & \textbf{Single}
& \shortstack{\textbf{2-gen}\\ \textbf{shared $A$}}
& \shortstack{\textbf{2-gen}\\ \textbf{shared $B$}}
& \shortstack{\textbf{6-cycle}\\ \textbf{$(A,A,A)$}}
& \shortstack{\textbf{6-cycle}\\ \textbf{$(A,A,B)$}} \\
\midrule
\multirow{3}{*}{$K_{3,3}$}
& None          & $10$ & $46$  & $46$ & $0$ & $0$  \\
& Block         & $0$ & $46$  & $46$ & $0$ & $0$ \\
& Edge/monomial & $0$ &  $0$& $0$ & $0$ & $0$ \\
\midrule
\multirow{3}{*}{$K_{2,4}$}
& None          & $6$ & $52$ & $46$ & $44$ & $28$  \\
& Block         & $6$ & $52$ & $46$ & $44$ & $28$ \\
& Edge/monomial & $0$ & $0$ & $0$ & $0$ & $0$ \\
\midrule
\multirow{3}{*}{$K_{4,4}$}
& None          & $35$ & $87$  & $87$ & $0$ &$0$  \\
& Block         & $18$ & $87$ & $87$ & $0$ & $0$ \\
& Edge/monomial & $0$ & $0$ & $0$ & $0$ & $0$ \\
\bottomrule
\end{tabular}%
}
\end{table}

\section{Experimental results}
\label{sec:practice}
MS decoding is a low-complexity approximation of BP~\cite{05CDEFH}.  
Since no channel information is available, we use the syndrome-based MS variant. A check-to-variable message sent from check node $c$ to variable node $v$ at iteration $\ell$ is denoted by $\mu^{(\ell)}_{c,v}$, whereas a variable-to-check message sent from variable node $v$ to check node $c$ is denoted by $\nu^{(\ell)}_{v,c}$ and computed as
\begin{equation}
    \nu^{(\ell)}_{v,c} = \lambda + \sum\limits_{c' \in \mathcal{M}(v) \backslash \{ c \}}\mu^{(\ell)}_{c',v},
    \label{eq:vnu}
\end{equation}
where $\lambda=\log\!\left(\frac{1-\alpha}{\alpha}\right)$ is the a priori log-likelihood ratio (LLR) of every variable node. MS complies with Definition~\ref{defn:isotropic}, so we refer to it as \emph{isotropic MS} decoder.

A deterministic anisotropic MS rule introduced in~\cite{chytas2025enhancedminsumdecodingquantum} applies oscillation damping on a set of variable nodes only after checking whether the sign of the pre-computed variable-to-check message disagrees with that of the previous message on the same edge. In this work, however, we adopt a simpler realization that is more closely aligned with the  framework of Section~\ref{sec:Theory}. Specifically, we realize anisotropy through label-dependent damping applied directly in the variable-to-check update at every iteration. Thus, the edge labeling is fixed throughout decoding, and the label-dependent part of the update is always active. This yields a direct realization of graph coloring: different edge classes apply different local rules at all times. By contrast the conditional sign-flip approach is a dynamics-aware variant, for which the  anisotropic part of the update is activated only on edges whose messages reverse sign. Since the focus of this paper is the short iteration regime, we use unconditional damping as the primary realization, because it introduces the intended symmetry breaking immediately and makes the effect of decoder labeling on early convergence easier to assess. 

Let $\chi:\mathcal E\to\{1,\dots,K\}$
be an edge-class labeling function. Then the variable-to-check update is defined as
\begin{equation}
    \nu^{(\ell)}_{v,c}
    =
    \xi_{\chi(v,c)}\tilde{\nu}^{(\ell)}_{v,c}
    +
    \bigl(1-\xi_{\chi(v,c)}\bigr)\nu^{(\ell-1)}_{v,c},
    \label{eq:unified_xi}
\end{equation}
where $\tilde{\nu}^{(\ell)}_{v,c}$ is the pre-computed message obtained from Eq.~\eqref{eq:vnu}. 

Eq.~\eqref{eq:unified_xi} provides a unified description of the anisotropic decoders considered in this work. In particular, the block-anisotropic decoder of~\cite{chytas2025enhancedminsumdecodingquantum} is recovered, without the conditional sign-flip mechanism, by choosing two edge classes according to each block and assigning:
\[
\xi_{\chi(v,c)}=
\begin{cases}
0.5, & \text{if } v \text{ belongs to the damped block},\\
1, & \text{otherwise}.
\end{cases}
\]
\begin{figure}[t]
    \centering
    \begin{subfigure}[]
    {
        \centering
        \includegraphics[width=.42\textwidth]{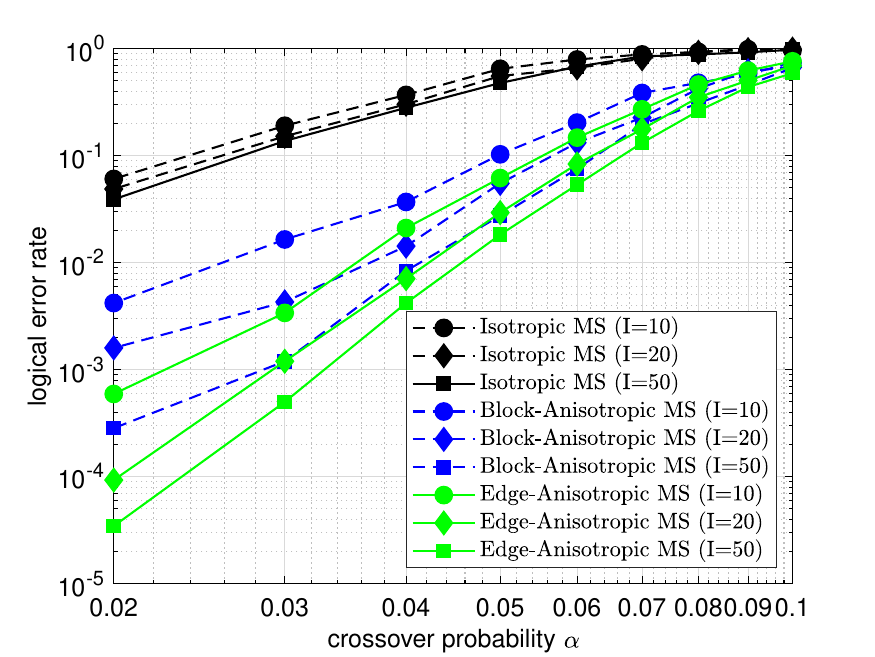}
        \label{fig:24a}
        }
    \end{subfigure}
    \begin{subfigure}[]
    {
        \centering
        \includegraphics[width=.42\textwidth]{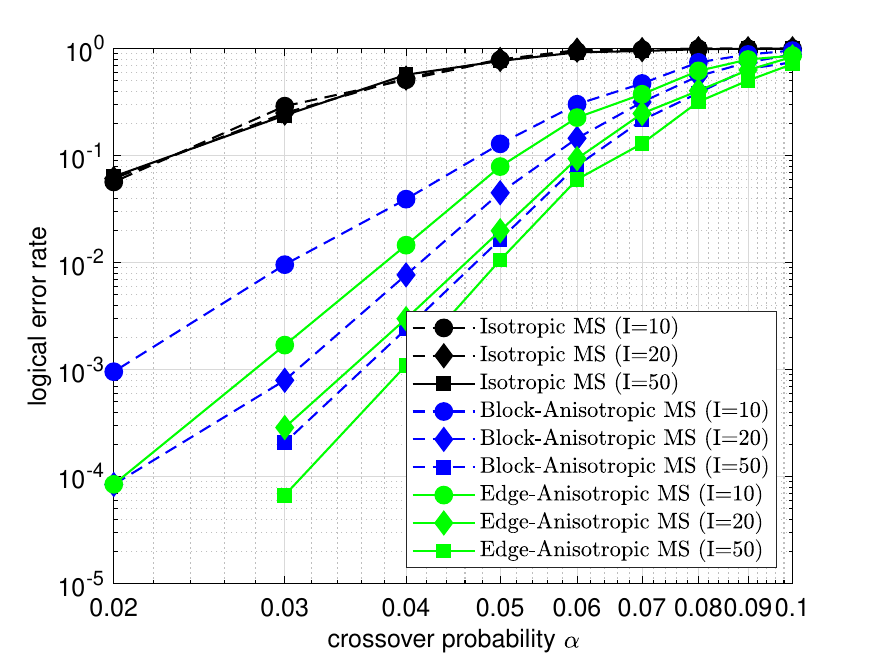}
        \label{fig:24b}
        }
    \end{subfigure}
    \begin{subfigure}[]
    {
        \centering
        \includegraphics[width=.42\textwidth]{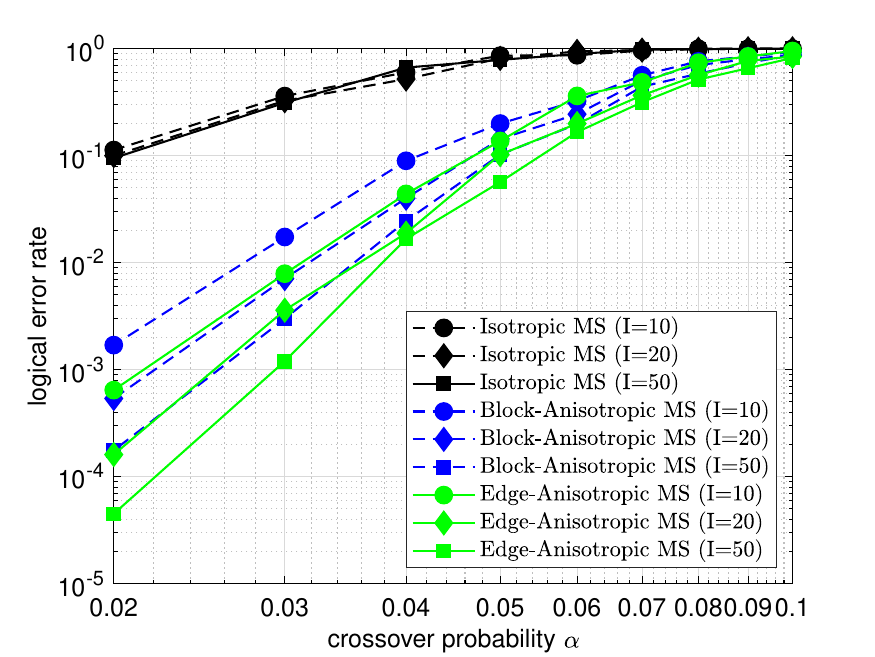}
        \label{fig:33}
        }
     \end{subfigure}
 \caption{Logical error rate under $X$ errors for isotropic MS, block-anisotropic MS, and an ensemble realization of edge-anisotropic MS on three representative GB codes: \subref{fig:24a} the GB code of~\cite{wang2022distanceboundsgeneralizedbicycle}, \subref{fig:24b} the $[[288,12,18]]$ BB code, and \subref{fig:33} the A5 code~\cite{osd}. Results are shown for iteration budgets $I\in\{10,20,50\}$.}
    \label{fig:performance}
\end{figure}

As shown in Section~\ref{sec:Theory}, block anisotropy does not remove all harmful automorphisms, which motivates a finer realization based on the monomial structure of the code. In particular, each monomial term induces a class of Tanner graph edges, and we assign $\chi(v,c)$ according to the monomial label of edge $(v,c)$. The resulting anisotropic decoder uses the same update rule as in Eq.~\eqref{eq:unified_xi}, but with one damping parameter $\xi_k$ per monomial edge class. The parameters $\xi_k$, $k\in\{1,\dots,K\}$, satisfy $0.5 \le \xi_k \le 1$. We restrict $\xi_k$ to this interval to avoid  message stalling, as we  observed degraded performance for values below $0.5$. For all experiments, we fix the normalization parameter to one in order to isolate the role of anisotropy itself.

We evaluate the $6$-limited and $8$-limited codes of Section~\ref{sec:Theory} under isotropic, block-anisotropic, and edge-anisotropic MS decoding, using iteration budgets $I\in\{10,20,50\}$. For the $6$-limited codes, each candidate is specified by six distinct damping parameters $\{\xi_k\}_{k=1}^{6}$, one per monomial edge class, while for the $8$-limited code we use eight parameters $\{\xi_k\}_{k=1}^{8}$; in all cases, the parameters satisfy $0.5\le \xi_k\le 1$. Since a principled optimal choice of monomial-level damping parameters is not yet available, and more structured methods are reserved for future work, we do the following.   We generate a random pool of $100$ candidate decoders and  evaluate it on the harmful degenerate error patterns supported on the stabilizer configurations of Table~\ref{tab:orbit_counts} by selecting five decoders that collectively give the best coverage within $10$ iterations. These five decoders are used as the ensemble realization of edge anisotropy in the performance experiments below. For the BB code, this ensemble corrects all such configurations within $10$ iterations, while for the other two codes it corrects most of them. Our emphasis on a small iteration budget is motivated both by low-latency decoding requirements and by the fact that, in this regime, decoder behavior is better approximated by the isolation assumption~\cite{iso} and therefore more directly interpretable through the analysis of Section~\ref{sec:Theory}.

In Fig.~\ref{fig:performance}, we compare isotropic MS, block-anisotropic MS, and an ensemble realization of edge-anisotropic MS for the representative GB codes of Table~\ref{tab:orbit_counts}. For all three codes, isotropic MS shows little or no improvement as the iteration budget increases. In Fig.~\ref{fig:performance}\subref{fig:24a}, block-anisotropic MS improves over isotropic MS, but its gains remain limited. This is consistent with Table~\ref{tab:orbit_counts}, where block coloring does not remove all harmful automorphisms for the corresponding stabilizer configurations. By contrast, the edge-anisotropic ensemble yields the strongest and most consistent improvement as the number of iterations increases. 
In Fig.~\ref{fig:performance}\subref{fig:24b}, block-anisotropic decoding performs more favorably, which is consistent with the fact that the relevant stabilizers are isomorphic to subdivided $K_{3,3}$ configurations, for which block coloring can already remove the dominant automorphisms. Even in this case, the monomial-edge realization remains superior across all iteration budgets. Similar behavior is observed for the A5 code in Fig.~\ref{fig:performance}\subref{fig:33}.

\section{Conclusions and future work}
Motivated by a decoder-agnostic, graph-theoretic analysis of degeneracy, we show that it can be  mitigated by employing edge-coloring. This is applicable to highly structured and practical codes of the literature such as GB codes, where coloring can be applied to sets of edges defined by each monomial. As a practical realization of this approach, we propose an edge-anisotropic MS decoder which is able to correct harmful degenerate errors in the support of low-weight linear combinations of stabilizer generators within a small number of iterations. 
Future work will be focused on the extension of such scheme for different families of codes and more realistic error models, starting with the phenomenological noise model. Another natural direction is to develop more principled decoder selection methods, possibly via machine learning, which we reserve for future work.

\IEEEtriggeratref{14} 
\bibliographystyle{IEEEtran}
\bibliography{bib/totalRefs.bib}

@misc{chytas2025enhancedminsumdecodingquantum,
      title={Enhanced Min-Sum Decoding of Quantum Codes Using Previous Iteration Dynamics}, 
      author={Dimitris Chytas and Nithin Raveendran and Bane Vasic},
      year={2025},
      eprint={2501.05021},
      archivePrefix={arXiv},
      primaryClass={quant-ph},
      url={https://arxiv.org/abs/2501.05021}, 
}

@misc{valentini2025restartbeliefgeneralquantum,
      title={Restart Belief: A General Quantum LDPC Decoder}, 
      author={Lorenzo Valentini and Diego Forlivesi and Andrea Talarico and Marco Chiani},
      year={2025},
      eprint={2511.13281},
      archivePrefix={arXiv},
      primaryClass={quant-ph},
      url={https://arxiv.org/abs/2511.13281}, 
}

@article{quantumTS,
  title = {Trapping Sets of Quantum {LDPC} Codes},
  author = {Raveendran, Nithin and Vasi{\'{c}}, Bane},
  journal = {{Quantum}},
  issn = {2521-327X},
  doi = {10.22331/q-2021-10-14-562},
  volume = {5},
  pages = {562},
  month = {Oct.},
  year = {2021}
}

@article{osd,
	doi = {10.22331/q-2021-11-22-585},

  	year = 2021,
	month = {Nov.},
	publisher = {Verein zur Forderung des Open Access Publizierens in den Quantenwissenschaften},
	volume = {5},
	pages = {585},
	author = {Pavel Panteleev and Gleb Kalachev},
	title = {Degenerate Quantum {LDPC} Codes With Good Finite Length Performance},
	journal = {Quantum}
}

@INPROCEEDINGS{QuantumTannerCodeszemor,
   title={{Quantum tanner codes}},
  author={Leverrier, Anthony and Z{\'e}mor, Gilles},
  booktitle={2022 IEEE 63rd Annual Symposium on Foundations of Computer Science (FOCS)},
  pages={872--883},
  year={2022},
  organization={IEEE},
  doi={10.1109/FOCS54457.2022.00117}}

@misc{wang2022distanceboundsgeneralizedbicycle,
      title={Distance bounds for generalized bicycle codes}, 
      author={Renyu Wang and Leonid P. Pryadko},
      year={2022},
      eprint={2203.17216},
      archivePrefix={arXiv},
      primaryClass={quant-ph},
      url={https://arxiv.org/abs/2203.17216}, 
}

@misc{delfosse2023choosedecoderfaulttolerantquantum,
      title={How to choose a decoder for a fault-tolerant quantum computer? The speed vs accuracy trade-off}, 
      author={Nicolas Delfosse and Andres Paz and Alexander Vaschillo and Krysta M. Svore},
      year={2023},
      eprint={2310.15313},
      archivePrefix={arXiv},
      primaryClass={quant-ph},
      url={https://arxiv.org/abs/2310.15313}, 
}

@article{Preskill_2025,
   title={Beyond NISQ: The Megaquop Machine},
   volume={6},
   ISSN={2643-6817},
   url={http://dx.doi.org/10.1145/3723153},
   DOI={10.1145/3723153},
   number={3},
   journal={ACM Transactions on Quantum Computing},
   publisher={Association for Computing Machinery (ACM)},
   author={Preskill, John},
   year={2025},
   month=apr, pages={1–7} }

@misc{mckay2013practicalgraphisomorphismii,
      title={Practical graph isomorphism, II}, 
      author={Brendan D. McKay and Adolfo Piperno},
      year={2013},
      eprint={1301.1493},
      archivePrefix={arXiv},
      primaryClass={cs.DM},
      url={https://arxiv.org/abs/1301.1493}, 
}

@misc{pradhan2025lineartimeiterativedecoders,
      title={Linear Time Iterative Decoders for Hypergraph-Product and Lifted-Product Codes}, 
      author={Asit Kumar Pradhan and Nithin Raveendran and Narayanan Rengaswamy and Bane Vasić},
      year={2025},
      eprint={2504.01728},
      archivePrefix={arXiv},
      primaryClass={cs.IT},
      url={https://arxiv.org/abs/2504.01728}, 
}

@misc{muller2025improvedbeliefpropagationsufficient,
      title={Improved belief propagation is sufficient for real-time decoding of quantum memory}, 
      author={Tristan Müller and Thomas Alexander and Michael E. Beverland and Markus Bühler and Blake R. Johnson and Thilo Maurer and Drew Vandeth},
      year={2025},
      eprint={2506.01779},
      archivePrefix={arXiv},
      primaryClass={quant-ph},
      url={https://arxiv.org/abs/2506.01779}, 
}

@article{bravyi2024high,
  title={High-threshold and low-overhead fault-tolerant quantum memory},
  author={Bravyi, Sergey and Cross, Andrew W and Gambetta, Jay M and Maslov, Dmitri and Rall, Patrick and Yoder, Theodore J},
  journal={Nature},
  volume={627},
  number={8005},
  pages={778--782},
  year={2024},
  publisher={Nature Publishing Group UK London}
}

@article{calderbank1996quantum_exists,
title={Good quantum error-correcting codes exist},
  author={Calderbank, A Robert and Shor, Peter W},
  journal={Physical Review A},
  volume={54},
  number={2},
  pages={1098},
  year={1996},
  publisher={APS}
}

@ARTICLE{mackay_quantum,
author={D. J. C. MacKay and G. Mitchison and P. L. McFadden},
journal={IEEE Transactions on Information Theory},
title={Sparse-graph Codes for Quantum Error Correction},
volume={50},
number={10},
pages={2315--2330},
month={Oct.},
year={2004},
doi={10.1109/TIT.2004.834737}
}

@inproceedings{panteleev2021quantumLinearMinDLocalTestable,
title={{Asymptotically good quantum and locally testable classical LDPC codes}},
  author={Panteleev, Pavel and Kalachev, Gleb},
  booktitle={Proceedings of the 54th Annual ACM SIGACT Symposium on Theory of Computing},
  pages={375--388},
  year={2022}
}

@ARTICLE{decim,
      title={{Belief Propagation Decoding of Quantum LDPC Codes with Guided Decimation}}, 
      author={Hanwen Yao and Waleed Abu Laban and Christian Häger and Alexandre Graell i Amat and Henry D. Pfister},
      year={2023},
journal = {arXiv preprint arXiv:2308.13377},
      eprint={2312.10950},
      archivePrefix={arXiv},
      primaryClass={cs.IT}
}

@ARTICLE{05CDEFH,
title={Reduced-Complexity Decoding of {LDPC} Codes},
author={ Chen, J. and Dholakia, A. and Eleftheriou, E. and Fossorier, M.P.C. and Hu, X.-Y.},
journal={IEEE Trans. Commun.},
year={2005},
month={Aug.},
volume={53},
number={8},
pages={1288--1299},
doi={10.1109/TCOMM.2005.852852}
}

@phdthesis{Gottesman97,
	author = {Daniel Gottesman},
	title = {Stabilizer Codes and Quantum Error Correction},
	school = {California Institute of Technology},
	year = {1997},
	eprint = {arXiv:quant-ph/9705052},
	doi = {10.7907/rzr7-dt72}
}

@article{Fuentes_DegeneracyImpact_IEEEAccess21,
author = {Fuentes, Patricio and Etxezarreta, Josu and Crespo, Pedro and Garcia-Frias, Javier},
year = {2021},
month = {Jun.},
pages = {89093--89119},
title = {Degeneracy and Its Impact on the Decoding of Sparse Quantum Codes},
volume = {9},
journal = {IEEE Access},
doi = {10.1109/ACCESS.2021.3089829}
}

@INPROCEEDINGS{iso,
  author={Planjery, Shiva Kumar and Chilappagari, Shashi Kiran and Vasić, Bane and Declercq, David and Danjean, Ludovic},
  booktitle={2010 Information Theory and Applications Workshop (ITA)}, 
  title={Iterative decoding beyond belief propagation}, 
  year={2010},
  volume={},
  number={},
  pages={1-10},
  doi={10.1109/ITA.2010.5454076}}

@misc{amb,
      title={{Ambiguity Clustering: an accurate and efficient decoder for qLDPC codes}}, 
      author={Stasiu Wolanski and Ben Barber},
      year={2024},
      eprint={2406.14527},
      archivePrefix={arXiv},
      primaryClass={quant-ph},
      url={https://arxiv.org/abs/2406.14527}, 
}

@misc{local,
      title={Localized statistics decoding: A parallel decoding algorithm for quantum low-density parity-check codes}, 
      author={Timo Hillmann and Lucas Berent and Armanda O. Quintavalle and Jens Eisert and Robert Wille and Joschka Roffe},
      year={2024},
      eprint={2406.18655},
      archivePrefix={arXiv},
      primaryClass={quant-ph},
      url={https://arxiv.org/abs/2406.18655}, 
}

@misc{window,
      title={{Toward Low-latency Iterative Decoding of QLDPC Codes Under Circuit-Level Noise}}, 
      author={Anqi Gong and Sebastian Cammerer and Joseph M. Renes},
      year={2024},
      eprint={2403.18901},
      archivePrefix={arXiv},
      primaryClass={quant-ph},
      url={https://arxiv.org/abs/2403.18901}, 
}

@INPROCEEDINGS{chytasISIT,
  author={Chytas, Dimitris and Raveendran, Nithin and Vasić, Bane},
  booktitle={2025 IEEE International Symposium on Information Theory (ISIT)}, 
  title={Enhanced Min-Sum Decoding of Quantum Codes with Iteration Dynamics Memory}, 
  year={2025},
  volume={},
  number={},
  pages={1-6},
  doi={10.1109/ISIT63088.2025.11195509}}
\end{document}